\newtheorem{theorem}{Theorem}
\newtheorem{corollary}{Corollary}
\newtheorem{lemma}{Lemma}
\theoremstyle{definition}
\newtheorem{definition}{Definition}
\theoremstyle{remark}
\theoremstyle{definition}
\theoremstyle{definition}
\newcommand{\N}{\mathbb{N}}
\newcommand{\R}{\mathbb{R}}
\newcommand{\C}{\mathcal{C}}
\newcommand{\K}{\mathcal{K}}
\definecolor{darkblue}{RGB}{0,0,102}
\definecolor{lightblue}{RGB}{77,77,148}
\definecolor{gold}{RGB}{234, 170, 0}
\definecolor{metallic_gold}{RGB}{139, 111, 78}
\renewcommand{\cal}[1]{\mathcal{ #1 }}
\newcommand{\mb}[1]{\mathbf{ #1 }}
\newcommand{\derp}[2]{\frac{\partial #1 }{\partial #2 }}
\DeclareMathOperator*{\esssup}{ess\,sup}
\title{\LARGE \textbf{Safety-Critical Event Triggered Control via \\ Input-to-State Safe Barrier Functions}}
\author{Andrew J. Taylor$^{*}$, Pio Ong$^{*}$, Jorge Cort\'es, and Aaron D. Ames
\thanks{
$^*$Both authors contributed equally.}
\thanks{
A. Taylor is with the Department of Computing and Mathematical Sciences, California Institute of Technology, Pasadena, CA 91125, USA {\tt\small ajtaylor@caltech.edu}. P. Ong and J. Cort\'es are with the Department of Mechanical
    and Aerospace Engineering, University of California, San Diego, CA
    92093, USA, {\tt\small \{piong,cortes\}@eng.ucsd.edu}. A. Ames is with the Department of Mechanical \& Civil Engineering, California Institute of Technology, Pasadena, CA 91125, USA {\tt\small ames@caltech.edu}.}
}
\begin{document}

\maketitle

\begin{abstract}
%   This paper investigates event-triggered control with an application to safeguarding controller in which the control task is to render a given \textit{safe set} forward invariant. We show through a counterexample that input-to-state safety (ISSf) is not sufficient for guaranteeing minimum interevent time, despite its similarity  to input-to-state stability (ISS) that is successful in stabilization context. For this reason, we propose instead the idea of strong ISSf barrier property, for which we can provide a trigger design with a minimum interevent time. In addition, we show that if the system have ISSf, the the strong ISSf barrier property can easily be obtained, albeit the price to pay is to have a slightly enlarged safe set.
The efficient utilization of available resources while simultaneously achieving control objectives is a primary motivation in the event-triggered control paradigm. In many modern control applications, one such objective is enforcing the safety of a system. The goal of this paper is to carry out this vision by combining event-triggered and safety-critical control design. We discuss how a direct transcription, in the context of safety, of event-triggered methods for stabilization may result in designs that are not implementable on real hardware due to the lack of a minimum interevent time. We provide a counterexample showing this phenomena and, building on the insight gained,  propose an event-triggered control approach via Input to State Safe Barrier Functions that achieves safety while ensuring that interevent times are uniformly lower bounded. We illustrate our results in simulation.
\end{abstract}

\vspace{-0.3cm}
\section{Introduction}\label{sec:intro}
Balancing control objectives such as stability and safety with the efficient use of available resources is critical in many modern control applications. Event-triggered control provides a framework that allows the prescription of, in a principled way, when certain resources (such as actuators, sensors, access to communication through a network or with neighboring agents, or even a human) should be utilized in order to guarantee the satisfaction of said control objectives efficiently, see e.g.,~\cite{tabuada2007event,heemels2012introduction,
%weimer2012distributed,
borger2014event-separation, postoyan2014framework,CN-EG-JC:19-auto} and references therein.
The robustness properties of the original system enable the opportunistic, rather than continuous, use of the resources  without compromising the control objective.
In this paper we unify event-triggered control with \textit{Input-to-State Safe Barrier Functions} (ISSf-BF) to ensure a system remains safe while efficiently updating its actuation. 
% The times at which actuation is updated is determined by a state and control objective dependent \textit{trigger law}, in between which the actuation is held constant. In this way the robustness of a given controller is utilized to allow efficient use of actuation resources 

Barrier Functions (BFs) \cite{xu2015robustness, ames2017control} have become increasingly popular as a tool for certifying the safety of a nonlinear control system \cite{ames2019control}. The impact on safety guarantees endowed by BFs introduced due to actuation errors was quantified in \cite{kolathaya2018input} through the notion of \textit{input-to-state safety} (ISSf) and Input-to-State Safe Barrier Functions. As the sample-and-hold implementation of controllers in the event-triggered context introduces errors in the actuation, ISSf provides a natural vehicle for designing trigger laws that guarantee a controller still achieves safety. The unification of event-triggered control with BFs is not a new idea \cite{ong2018event-triggered, yang2019self-triggered}, but is often done with the goal of improving stabilization rather than explicitly enforcing safety. Motivated by the work in \cite{tabuada2007event}, we use ISSf-BFs to quantify the impact of sample-and-hold actuation error on safety, and design a trigger law for updating actuation to enforce safety.  

One challenge in developing event-triggered control approaches for safety is ensuring that the time between events, or \textit{interevent times}, are lower bounded \cite{borger2014event-separation, postoyan2019inter-event, berneburg2019distributed}. Such bounds ensure that the resulting controller and trigger law can be practically implemented on systems that can not actuate infinitely quickly. The notion of Input to State Stability has been used to prove the existence of these bounds in the context of stabilization \cite{li2012stabilizing, tabuada2007event}. In contrast to the task of stabilization, in the context of safety the dynamics of the system, and thus the error dynamics, are not required to vanish as the quantity dictating the triggering of events vanishes. This can lead to events occurring in rapid succession. To ensure that interevent times are lower bounded, we leverage the notion of an \textit{input-to-state safe set} to ensure that the dynamics of the system evolve in a way that allows the quantity dictating the trigger to vanish. The end result is a concrete relationship between safety guarantees and minimum interevent times.

The main contribution of this paper is an event-triggered control paradigm for achieving safety of nonlinear control systems with uniformly lower bounded interevent times. Unlike stabilization, utilizing existing tools for safety in an event-triggered context may require additional assumptions or relaxations to yield a design that is implementable on real hardware. To this end, we define the \textit{strong ISSF-BF property}, and demonstrate how an arbitrarily small expansion of the safety set leads to uniformly lower bounded interevent times. To the best of our knowledge, \cite{yang2019self-triggered} is the only result that considers an event-triggered approach for achieving safety, but jointly considers it with stabilization and only addresses interevent time bounds corresponding to the latter. Our technical approach is the first to ensure safety and guarantee a minimum interevent time, and serves as a first step towards the development of a framework for the synthesis of safe controllers that optimize the usage of the system's resources.

\section{Event-Triggered Stability}
\label{sec:etstability}
In this section we discuss event-triggered control and review the problem of event-triggered stabilization following~\cite{tabuada2007event}. This review will motivate our approach for achieving event-triggered safety presented in Sections \ref{sec:lyapbar} and \ref{sec:etsafety}. 

Consider the nonlinear control system given by:
\begin{equation}
    \label{eqn:eom}
    \dot{\mb{x}} = \mb{f}(\mb{x},\mb{u}),
\end{equation}
where $\mb{x}\in\R^n$, $\mb{u}\in\R^m$, and $\mb{f}:\R^n\times\R^m\to\R^n$ is locally Lipschitz continuous in both arguments on $\R^n\times\R^m$. We further assume that $\mb{f}(\mb{0},\mb{u}_0)=\mb{0}$ for some $\mb{u}_0\in\R^m$. Under the choice of a Lipschitz continuous state-feedback controller $\mb{k}:\R^n\to\R^m$, with $\mb{k}(\mb{0})=\mb{u}_0$, the closed-loop system dynamics are given by:
\begin{equation}
    \label{eqn:cloop}
    \dot{\mb{x}} = \mb{f}(\mb{x},\mb{k}(\mb{x})).
\end{equation}
The assumption on local Lipschitz continuity of $\mb{f}$ and $\mb{k}$ implies that for any initial condition $\mb{x}_0 := \mb{x}(0) \in \R^n$, there exists a maximum time interval $I(\mb{x}_0) = [0, t_{\textrm{max}})$ such that $\mb{x}(t)$ is the unique solution to \eqref{eqn:cloop} on $I(\mb{x}_0)$. In the case $\mb{f}(\cdot,\mb{k}(\cdot))$ is forward complete, $t_{\textrm{max}}=\infty$. 

In an event-triggered context, the implementation of the feedback control law $\mb{k}$ is done by sampling the state at sequential time instances, $t_0, t_1, t_2, \ldots$, and evaluating the controller on the corresponding states $\mb{x}(t_0), \mb{x}(t_1), \mb{x}(t_2), \ldots$ Between measurements the control input is held constant:
\begin{equation}
\label{eqn:samplehold}
    \mb{u}(t) = \mb{k}(\mb{x}(t_i)) \quad \forall t\in[t_i, t_{i+1}).
\end{equation}
The time instances at which the controller is updated are determined by a state-dependent execution rule or trigger law. We define the measurement error as:
\begin{equation}
    \mb{e}(t) = \mb{x}(t_i)-\mb{x}(t) \qquad \forall t\in[t_i, t_{i+1}),
\end{equation}
noting that this in conjunction with \eqref{eqn:samplehold} implies:
%
% \marginJC{is reference to (6) here meant to instead be reference to (3)?}
%
\begin{equation}\label{eqn:cloopZOH}
    \dot{\mb{x}}(t) = -\dot{\mb{e}}(t)= \mb{f}(\mb{x}(t),\mb{k}(\mb{x}(t_i))) \qquad \forall t\in[t_i, t_{i+1}),
\end{equation}
The closed-loop dynamics \eqref{eqn:cloopZOH} can alternatively be written as:
\begin{equation}
    \label{eqn:clooperr}
    \dot{\mb{x}} = \mb{f}(\mb{x},\mb{k}(\mb{x}+\mb{e})).
\end{equation}
where $\mb{e}\in\R^n$ is the measurement error.

Event-triggered stabilization relies on the robustness to disturbances of the original dynamics, formalized through Input-to-State Stable Lyapunov Functions (ISS-LF)~\cite{sontag1995characterizations2, sontag2008input}.

%\vspace{0.3cm}
\noindent \textit{Notation:}  Throughout the paper we make use of the following basic definitions. A continuous function $\alpha:[0,a)\to\R_+$, with $a>0$, is \textit{class $\cal{K}$} ($\alpha\in\cal{K}$) if $\alpha(0)=0$ and $\alpha$ is strictly monotonically increasing. If $a=\infty$ and $\lim_{r\to\infty}\alpha(r)=\infty$, then $\alpha$ is \textit{class $\cal{K}_\infty$} ($\alpha\in\cal{K}_\infty$). A continuous function $\alpha:(-b,a)\to\R$, with $a,b>0$, is \textit{extended class $\cal{K}$} ($\alpha\in\cal{K}_e$) if $\alpha(0)=0$ and $\alpha$ is strictly monotonically increasing. If $a,b=\infty$, $\lim_{r\to\infty}\alpha(r)=\infty$, and $\lim_{r\to-\infty}\alpha(r)=-\infty$, then $\alpha$ is \textit{extended class $\cal{K}_\infty$} ($\alpha\in\cal{K}_{\infty,e}$)

\begin{definition}[\textit{ISS Lyapunov Function (ISS-LF)}]
A continuously differentiable function $V:\R^n\to\R_+$ is an \textit{Input to State Stable Lyapunov Function} (ISS-LF) for \eqref{eqn:clooperr}, with respect to measurement errors $\mb{e}$, if there exists $\alpha_1,\alpha_2,\alpha_3\in\cal{K}_\infty$ and $\gamma\in\cal{K}_\infty$ such that for all $\mb{x},\mb{e}\in\R^n$:
\begin{subequations}
\begin{align}
    \label{eqn:Vbounds}\alpha_1(\Vert\mb{x}\Vert_2) & \leq V(\mb{x})  \leq \alpha_2(\Vert\mb{x}\Vert_2), \\ \label{eqn:Vdotbound} \derp{V}{\mb{x}}(\mb{x})\mb{f}(\mb{x},\mb{k}(\mb{x}+\mb{e})) & \leq -\alpha_3(\Vert\mb{x}\Vert_2)+\gamma(\Vert\mb{e}\Vert_2). 
\end{align}
\end{subequations}
\end{definition}

As in \cite{tabuada2007event}, if we define the trigger law to enforce:
\begin{equation}
\label{eqn:triggerineq}
    \gamma(\Vert \mb{e}(t) \Vert_2) \leq \sigma \alpha_3(\Vert\mb{x}(t)\Vert_2) \qquad 0<\sigma<1,
\end{equation}
the ISS-LF condition \eqref{eqn:Vdotbound} leads to:
\begin{equation}
\label{eqn:eventVdot}
    \derp{V}{\mb{x}}(\mb{x})\mb{f}(\mb{x},\mb{k}(\mb{x}+\mb{e})) \leq (\sigma-1)\alpha_3(\Vert\mb{x}\Vert_2),
\end{equation}
implying asymptotic stability of \eqref{eqn:clooperr} to $\mb{x}^\star = \mb{0}$. The inequality in \eqref{eqn:triggerineq} can be enforced by defining the trigger law as:
\begin{equation}
\label{eqn:stabletrigger}
   t_{i+1} = \min\left\{t\geq t_i ~|~ \gamma(\Vert\mb{e}(t)\Vert_2) = \sigma\alpha_3(\Vert\mb{x}(t)\Vert_2)\right\}.
\end{equation}
As is typical in event-triggered control formulations, it is critical to show that such a trigger law does not lead to the control being updated at arbitrarily close time instances \cite{postoyan2014framework}, or that the \textit{interevent times} $\{t_{i+1}-t_i\}_{i\in\N}$ are lower bounded by a positive constant $\tau\in\R$, $\tau>0$, referred to as the minimum interevent time (MIET).
%
% \marginJC{I think here I would refer to this as the "minimum interevent time" and abbreviate it as MIET. This is standard now in the event-triggered control literature, and if use it, it will simplify the exposition later in various places.}
%
This differs slightly from preventing the stronger notion of Zeno behavior \cite{borger2014event-separation},
%
% \marginJC{I'd rather cite here something specific to event-triggered control and Zeno, like \cite{borger2014event-separation}. This will also help reduce self-referencing, which is elegant :-)}
%
in which the series of interevent times converges (implying the lack of a MIET). The results of \cite{tabuada2007event} ensure that a MIET exists under the trigger~\eqref{eqn:eventVdot}.

\section{Input-to-State Safety}
\label{sec:BFs}
In this section we provide background information on Barrier Functions (BFs) and Input to State Safe Barrier Functions (ISSf-BFs) that will be used to construct an event-triggered control paradigm that ensures safety.

Consider a set $\C\subseteq \R^n$ defined as the 0-superlevel set of a continuously differentiable function $h:\R^n \to \R$, yielding:
\begin{subequations}
\begin{align}
    \C &\triangleq \left\{\mb{x} \in \R^n : h(\mb{x}) \geq 0\right\}, \label{eqn:safeset}\\
    \partial\C &\triangleq \{\mb{x} \in \R^n : h(\mb{x}) = 0\},\label{eqn:safesetboundary}\\
    \textrm{Int}(\C) &\triangleq \{\mb{x} \in \R^n : h(\mb{x}) > 0\}.\label{eqn:safetsetinterior}
\end{align}
\end{subequations}
% We define for a point $\mb{x}\in\R^n$ the distance from the set $\C$ as:
% \begin{equation}
%     \Vert \mb{x} \Vert_{\C} = \inf_{\mb{y}\in\C} \Vert \mb{x}-\mb{y}\Vert_2,
% \end{equation}
% and assume there exists a $\beta\in\cal{K}_\infty$ such that $h$ satisfies:
% \begin{equation}
% \label{eqn:hradialunbound}
%     h(\mb{x})\leq -\beta(\Vert\mb{x}\Vert_\C),
% \end{equation}
% for all $\mb{x}\notin\C$. This assumptions is similar to the lower bound on $V$ in \eqref{eqn:Vbounds} in that it implies $h$ grows unbounded (in the negative direction) as the distance from $\C$ grows. 
We assume that $\C$ is nonempty and has no isolated points, that is, $\textrm{Int}(\C) \not = \emptyset \textrm{ and } \overline{\textrm{Int}(\C)} = \C$. We refer to $\cal{C}$ as the \textit{safe set}. This construction motivates the following definitions: % of forward invariance and safety.

\begin{definition}[\textit{Forward Invariant \& Safety}]
A set $\C$ is \textit{forward invariant} if for every $\mb{x}_0\in\C$, the solution $\mb{x}(t)$ to \eqref{eqn:cloop} satisfies $\mb{x}(t) \in \C$ for all $t \in I(\mb{x}_0)$. The system \eqref{eqn:cloop} is \textit{safe} on $\C$ if the set is forward invariant.
\end{definition}

Verifying that the system \eqref{eqn:cloop} is safe on a set $\C$ can be done via a Barrier Function:

\begin{definition}[\textit{Barrier Function (BF)}]
A continuously differentiable function $h:\R^n\to\R$ is a \textit{Barrier Function} (BF) for \eqref{eqn:cloop} on a set $\cal{C}\subset\R^n$ defined as in \eqref{eqn:safeset}-\eqref{eqn:safetsetinterior}, if there exists $\alpha\in\K_{\infty,e}$ such that for all $\mb{x}\in\R^n$:
\begin{equation}
\label{eqn:barrier}
    \derp{h}{\mb{x}}(\mb{x})\mb{f}(\mb{x},\mb{k}(\mb{x})) \geq -\alpha(h(\mb{x})),
\end{equation}
\end{definition}
As shown in \cite{xu2015robustness}, the existence of a barrier function for \eqref{eqn:cloop} on a set $\C$ is sufficient to prove the safety and asymptotic stability of~$\C$.
To consider the impact of measurement errors on safety, we consider the notion of input-to-state safety~\cite{kolathaya2018input}.

\begin{definition}[\textit{Input to State Safety (ISSf)}]
Let the signal $\mb{e}:\R_+\to\R^n$ be essentially bounded and define $\Vert\mb{e}\Vert_\infty = \esssup_{t\geq0}\Vert\mb{e}(t)\Vert_2$. The closed-loop system \eqref{eqn:clooperr} is \textit{input-to-state safe} (ISSf) on $\C$ with respect to measurement errors $\mb{e}$ if there exists $\gamma\in\cal{K}_\infty$ and a set $\C_\mb{e}\supset\C$ defined as:
\begin{subequations}
\begin{align}
    \C_{\mb{e}} &\triangleq \left\{\mb{x} \in \R^n : h(\mb{x})+\gamma(\Vert\mb{e}\Vert_\infty) \geq 0\right\}, \label{eqn:safeseterr}\\
    \partial\C_{\mb{e}} &\triangleq \{\mb{x} \in \R^n : h(\mb{x})  +\gamma(\Vert\mb{e}\Vert_\infty) = 0\},\label{eqn:safesetboundaryerr}\\
    \textrm{Int}(\C_{\mb{e}}) &\triangleq \{\mb{x} \in \R^n : h(\mb{x})+\gamma(\Vert\mb{e}\Vert_\infty) > 0\},\label{eqn:safetsetinteriorerr}
\end{align}
\end{subequations}
such that \eqref{eqn:clooperr} is safe on $\C_\mb{e}$.
\end{definition}

We refer to $\C$ as an \textit{input-to-state safe} set (ISSf set) if such a set $\C_{\mb{e}}$ exists. This definition implies that though the set $\C$ may not be safe, a larger set $\C_\mb{e}$, depending on $\mb{e}$, is safe.
%
% \marginJC{"may be safe" or "is safe"? From the def, it looks like the latter.}
%
If $\mb{e}\equiv\mb{0}$, we recover that the set $\C$ is safe. This motivates the following definition of Input-to-State Safe Barrier Functions:

\begin{definition}[\textit{Input to State Safe Barrier Function (ISSf-BF)}]
A continuously differentiable function $h:\R^n\to\R$ is an \textit{Input-to-State Safe Barrier Function} (ISSf-BF) for \eqref{eqn:clooperr} on a set $\cal{C}\subset\R^n$ defined as in \eqref{eqn:safeset}-\eqref{eqn:safetsetinterior}, if there exists $\alpha\in\K_{\infty,e}$ and $\iota\in\cal{K}_\infty$ such that for all $\mb{x},\mb{e}\in\R^n$:
\begin{equation}
\label{eqn:ISSf-BF}
    \derp{h}{\mb{x}}(\mb{x})\mb{f}(\mb{x},\mb{k}(\mb{x}+\mb{e})) \geq -\alpha(h(\mb{x}))-\iota(\Vert\mb{e}\Vert_2),
\end{equation}
\end{definition}
As shown in \cite{kolathaya2018input}, the existence of an ISSf-BF for \eqref{eqn:clooperr} on $\C$ implies $\C$ is an ISSf set, implying safety and asymptotic stability to the set $\C_{\mb{e}}$.
%
% \marginJC{"global asymptotic stability" here too?}
%

\section{Towards Resource-Aware Safety: from Lyapunov to Barriers}\label{sec:lyapbar}
%
% \marginJC{I'm not crazy about this title :-) I think b/c it does not capture the obstructions/difficulties in doing the transition from one to the other. }

% \marginJC{Also, i think we're missing at the beginning of this section 1/2 sentences explaining why we're doing this whole thing (b/c we're interested in saving actuator effort in implementing safe controllers, no? This title instead might make a reader think that we're doing this as an exercise. }
%
To more efficiently utilize actuation resources when implementing safe controllers, we seek to unify the preceding concepts of event-triggered control and input-to-state safety. In this section we discuss challenges that arise in directly transferring ideas from event-triggered stabilization to safety. Given the similarity of the ISS-LF constraint \eqref{eqn:Vdotbound} and the ISSf-BF constraint \eqref{eqn:ISSf-BF}, it is natural to propose a trigger law that enforces: 
\begin{equation}
\label{eqn:safetriggerineq}
    \iota(\Vert \mb{e}(t) \Vert_2) \leq \sigma \alpha(h(\mb{x}(t))) \qquad 0<\sigma,
\end{equation}
% %
% \marginJC{$0<\sigma<1$?}
% %
implying:
\begin{equation}
\label{eqn:hdotbound}
    \derp{h}{\mb{x}}(\mb{x})\mb{f}(\mb{x},\mb{k}(\mb{x}+\mb{e})) \geq -(1+\sigma)\alpha(h(\mb{x})).
\end{equation}
This can be interpreted as allowing the system to more quickly approach the boundary of the safe set at the expense of actuation resources. It is important to note that inside the set $\C$ it is possible to satisfy \eqref{eqn:safetriggerineq} and thus enforce safety of $\C$, but the inequality is impossible to satisfy outside of the set $\C$ as $\alpha(h(\mb{x}))<0$ if $\mb{x}\notin\C$. This type of behavior does not arise in the context of event-triggered stabilization, where convergence is to a point. One method to solve this issue is to instead define the trigger law:
\begin{equation}
\label{eqn:safetriggerineqsigned}
    \iota(\Vert \mb{e}(t) \Vert_2) \leq \sigma \vert\alpha(h(\mb{x}(t)))\vert \qquad 0<\sigma<1,
\end{equation}
which enforces \eqref{eqn:hdotbound} if $\mb{x}\in\C$ and enforces:
\begin{equation}
\label{eqn:hdotboundoutside}
    \derp{h}{\mb{x}}(\mb{x})\mb{f}(\mb{x},\mb{k}(\mb{x}+\mb{e})) \geq -(1-\sigma)\alpha(h(\mb{x})),
\end{equation}
if $\mb{x}\notin\C$. In this formulation, the system is not only allowed to more quickly approach the boundary, but is also not required to converge to the set as quickly when outside of the set. This is a generalization of event-triggered stabilization to a set.

% \marginJC{I think somewhere around here I'd flat out say what the challenges have to do with: the fact that, in stabilization, as one gets closer to the equilibrium, the dynamics also slows down, whereas that is not necessarily the case in safety when we get closer to the boundary. I think a simple explanation like this in English can be a powerful guide for the reader to navigate later the technical discussion.}

% Even with this solution, it is not the case that this trigger law leads to a lower bounded interevent times.
Even with this solution, it is not guaranteed that this trigger law will have a MIET. Although ruling out Zeno behavior is not required to guarantee safety, unlike stabilization, it is important to have a MIET in term of implementation of the controller (cf. \cite{borger2014event-separation}). The key difference between stability and safety leading to the failure of a MIET to exist for a safe event-triggered controller lies in how the system dynamics must behave close to an equilibrium point compared to how they can behave close to the boundary of the safe set. In stabilization, continuity of the dynamics requires the dynamics to vanish as the equilibrium is approached, leading to the error dynamics in \eqref{eqn:cloopZOH} vanishing. In safety, the dynamics close to the boundary of the safe set need not vanish as the boundary is approached, such that the error dynamics in \eqref{eqn:cloopZOH} need not vanish. We provide the following counterexample to illustrate how this difference can lead to an MIET failing to exist for a safe controller.

\subsection{Counterexample}
%
% \marginJC{Is it "Counter Example" or "Counterexample"?}
%
Consider the following system:
\begin{equation}
\label{eqn:cex}
    \frac{\mathrm{d}}{\mathrm{d}t}\begin{bmatrix}x_1 \\ x_2 \end{bmatrix} = \begin{bmatrix}x_2 \\ -x_1 \end{bmatrix} +\begin{bmatrix}x_1 \\ x_2 \end{bmatrix}u.
\end{equation}
for which we wish to ensure the safety of the set $\C$, given by the 0-superlevel set of the continuously differentiable function $h(\mb{x})=1-x_1^2-x_2^2=1-\Vert\mb{x}\Vert_2^2$. The time derivative of this function along solutions to \eqref{eqn:cex} is given by $\dot{h}(\mb{x},u) = -2(x_1^2+x_2^2)u$, for which the state-feedback controller $k(\mb{x}) = \frac{1}{2}(1-x_1^2-x_2^2)$ yields $\dot{h}(\mb{x}) = -(x_1^2+x_2^2)h(\mb{x})\geq -h(\mb{x})$, which implies $h$ is a valid BF for the set $\C$.

In an event triggered context, the closed-loop dynamics of the system are then given by:
\begin{equation}
\label{eqn:cexcloop}
\dot{\mb{x}}(t) = \begin{bmatrix}k(\mb{x}(t_i)) & 1 \\ -1 & k(\mb{x}(t_i)) \end{bmatrix}\mb{x}(t),
\end{equation}
for each time $t\in [t_i,t_{i+1})$. This leads to the time derivative of $h$ along solutions to \eqref{eqn:cexcloop} being given by:
\begin{equation*}
\dot{h}(\mb{x}(t),\mb{e}(t)) = -\Vert\mb{x}(t)\Vert_2^2h(\mb{x}(t_i)) =   -\Vert\mb{x}(t)\Vert_2^2h(\mb{x}(t)+\mb{e}(t))
\end{equation*}
for each time $t\in [t_i,t_{i+1})$, where $\mb{e}(t)=\mb{x}(t_i)-\mb{x}(t)$.
%With respect to the zeroth-order hold error $\mb{e}= \mb{x}-\mb{x}_i$, the system has the following local ISSf-BF,
% $$
%     h(\mb{x}) = 1-\|\mb{x}\|^2
% $$
% \begin{equation}
%     \dot{h}(\mb{x},u) = -2(x_1^2+x_2^2)u
% \end{equation}
% \begin{equation}
%     \alpha(r) = r
% \end{equation}
% \begin{equation}
%     \dot{h}(\mb{x},k(\mb{x})) = -(x_1^2+x_2^2)h(\mb{x})
% \end{equation}
% We need:
% \begin{equation}
%     -(x_1^2+x_2^2)h(\mb{x}) \geq -h(\mb{x})
% \end{equation}
% or
% \begin{equation}
%     (x_1^2+x_2^2)h(\mb{x}) \leq h(\mb{x})
% \end{equation}
% If $h\geq 0$, we have $x_1^2+x_2^2\leq 1$, and this is satisfied. If $h<0$, we have $x_1^2+x_2^2 > 1$ and this also holds true. Now show it breaks.
To see that the BF $h$ is in fact an ISSf-BF, we note its time derivative can be bounded as follows (omitting the explicit dependence on time):
\begin{align*}
    \dot{h}(\mb{x},\mb{e}) &= -\Vert\mb{x}\Vert_2^2 h(\mb{x}+\mb{e})\\
    &=-\Vert\mb{x}\Vert_2^2 (1-\Vert\mb{x}\Vert_2^2-2\mb{x}^\top \mb{e}-\Vert\mb{e}\Vert_2^2)\\
    &\geq -\Vert\mb{x}\Vert_2^2 (1-\Vert\mb{x}\Vert_2^2)-2\Vert\mb{x}\Vert_2^3\Vert\mb{e}\Vert_2\\
    &\geq -(1-\Vert\mb{x}\Vert_2^2)-2\Vert\mb{x}\Vert_2^3\Vert\mb{e}\Vert_2\\
    &\geq -h(\mb{x})-2r^3\Vert\mb{e}\Vert_2,
\end{align*}
for $\Vert\mb{x}\Vert_2\leq r$. 
Given that $h$ is an ISSf-BF on some domain containing the unit circle (choose $r>1$), the trigger law enforcing \eqref{eqn:safetriggerineqsigned} is given by:
\begin{equation}
\label{eqn:cextrigger}
t_{i+1} = \min\{t\geq t_i ~|~ 2r^3\Vert\mb{e}(t)\Vert_2=\sigma |h(\mb{x}(t))|\},
\end{equation}
with $0<\sigma<1$. This will guarantee that $\C$ is safe as:
\begin{equation*}
\dot h(\mb{x},\mb{e}) \geq \begin{cases} -(1+\sigma)h(\mb{x}), & \Vert\mb{x}\Vert_2\leq 1 ,
\\ 
-(1-\sigma)h(\mb{x}), & 1< \Vert\mb{x}\Vert_2\leq r .
\end{cases}
\end{equation*}
%
% \marginJC{I'd state this as a lemma, e.g., system bla does not possess a MIET. And then the discussion that follows here will be inside a proof environment. I think that'll make the exposition of the main ideas clearer -- the reader that wants to double check our math, can, and the reader that wants to move along, also can!}
%
\begin{lemma}[MIET does not exist]\label{le:no-MIET}
The system \eqref{eqn:cexcloop} with the trigger law defined as in \eqref{eqn:cextrigger} does not possess an MIET.
\end{lemma}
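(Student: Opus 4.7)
My plan is to exhibit a sequence of initial conditions approaching the boundary $\partial\C$ and to show that the corresponding first interevent time tends to zero, hence no uniform positive lower bound $\tau$ can exist.

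For each $a\in(0,1)$, I would take the initial condition $\mb{x}_0=(a,0)$, so that $h(\mb{x}_0)=1-a^2>0$ and the held control value is $c\defeq k(\mb{x}_0)=\tfrac{1}{2}(1-a^2)=\tfrac{1}{2}h(\mb{x}_0)$. The closed-loop matrix in \eqref{eqn:cexcloop} then decomposes as $cI+J$, where $J$ is the skew-symmetric rotation generator, and these two pieces commute. Consequently the solution on $[0,t_1)$ is explicit:
\begin{equation*}
    \mb{x}(t)=e^{ct}\begin{bmatrix}\phantom{-}a\cos t\\ -a\sin t\end{bmatrix},\qquad \|\mb{x}(t)\|_2^2=a^2e^{2ct},
\end{equation*}
so $h(\mb{x}(t))=1-a^2 e^{2ct}$ and a short calculation gives $\|\mb{e}(t)\|_2^2 = a^2\bigl(1-2e^{ct}\cos t+e^{2ct}\bigr)$.

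The key step is a Taylor expansion in $t$ about $t=0$: using $e^{ct}=1+ct+O(t^2)$ and $\cos t=1-t^2/2+O(t^4)$, one obtains $\|\mb{e}(t)\|_2=a\,t+O(t^2)$ (the $c$-dependent terms cancel at first order), while $h(\mb{x}(t))=h(\mb{x}_0)-a^2h(\mb{x}_0)\,t+O(t^2)=h(\mb{x}_0)\bigl(1+O(t)\bigr)$. Substituting into the triggering equation \eqref{eqn:cextrigger}, the first event time $t_1(a)$ is characterized by $2r^{3}a\,t_1(a)=\sigma\,h(\mb{x}_0)\bigl(1+O(t_1(a))\bigr)$, and since the left-hand side is strictly increasing in $t$ from $0$ while the right-hand side is near $\sigma h(\mb{x}_0)$ for small $t$, the implicit function theorem (or a direct monotonicity argument) yields
\begin{equation*}
    t_1(a)\;=\;\frac{\sigma\,h(\mb{x}_0)}{2r^{3}a}+o(h(\mb{x}_0))\;=\;\frac{\sigma\,(1-a^2)}{2r^{3}a}+o(1-a^2).
\end{equation*}

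Letting $a\uparrow 1$ (equivalently $h(\mb{x}_0)\downarrow 0$), we get $t_1(a)\to 0$. Therefore no $\tau>0$ can serve as a uniform lower bound on the interevent times, proving the claim. The part that needs the most care is making the $O(\cdot)$ control rigorous enough to conclude that the trigger equality really is first attained inside the small-$t$ regime where the expansion is valid; this can be handled by choosing $a$ close enough to $1$ so that $h(\mb{x}_0)/(2r^3)$ is small, observing that $\|\mb{e}(t)\|_2$ is strictly increasing on a neighborhood of $0$ and $h(\mb{x}(t))$ decreases continuously from $h(\mb{x}_0)$, and invoking the intermediate value theorem to localize $t_1(a)$ in an arbitrarily small interval $[0,\varepsilon]$ uniformly as $a\uparrow 1$.
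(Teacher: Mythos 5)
Your computations are correct, and the local estimate you derive --- that starting from a state with $h=\epsilon$ the trigger \eqref{eqn:cextrigger} fires after roughly $\sigma\epsilon/(2r^3)$ time units, because $\Vert\mb{e}(t)\Vert_2$ grows linearly with slope bounded away from zero while the threshold $\sigma\vert h\vert$ is of order $\epsilon$ --- is exactly the engine of the paper's argument as well. (A cosmetic point: the first-order coefficient of $\Vert\mb{e}(t)\Vert_2$ is $a\sqrt{1+c^2}$, not $a$; the $c$-dependent terms do not cancel, but since $\sqrt{1+c^2}\geq 1$ this only makes the trigger fire sooner, so nothing breaks.)

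The gap is in the quantifiers. You let the \emph{initial condition} range over a family approaching $\partial\C$ and show that the \emph{first} interevent time tends to zero. That rules out a MIET that is uniform over initial conditions, but the paper defines the MIET as a positive lower bound on the sequence $\{t_{i+1}-t_i\}_{i\in\N}$ generated by a single execution, and its proof of Lemma~\ref{le:no-MIET} accordingly fixes one trajectory, assumes such a $\tau>0$ exists for it, and derives a contradiction. To prove that per-trajectory statement you must show that along one fixed execution the \emph{event states themselves} accumulate at the boundary, i.e.\ that there are infinitely many events and $h(\mb{x}(t_i))\to 0$, and only then apply your local estimate at those event states (it transfers verbatim, since the inter-sample flow and the trigger condition depend on $\mb{x}(t_i)$ only through $\Vert\mb{x}(t_i)\Vert_2$). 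This is the bulk of the paper's proof and is not free: it requires (i) the lower bound $\Vert\mb{e}(t)\Vert_2\geq \vert e^{\omega_i/2}-1\vert\,\Vert\mb{x}_i\Vert_2$, which grows without bound and forces infinitely many events; (ii) monotone decrease of $h$ inside $\C$, so that $\lim_{t\to\infty}h(\mb{x}(t))$ exists; and (iii) a contradiction argument, using the assumed $\tau$, showing that this limit must be $0$. Without some version of these steps, your argument as written does not exclude the possibility that every individual execution enjoys its own positive lower bound on interevent times, which is the reading of the lemma the paper actually establishes.
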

\begin{proof}
To show that the interevent times $\{t_{i+1}-t_i\}_{i\in\N}$ are not lower bounded, we will proceed via contradiction. In particular, let us assume that there exists $\tau>0$ such that $t_{i+1}-t_i\geq\tau$ for all $i\in\N$. If the state $\mb{x}_i=\mb{x}(t_i)$ at event time $t_i$ is used as an initial condition, the solution to~\eqref{eqn:cexcloop} is: 
\begin{align*}
\mb{x}(t)& = \exp{\left(\frac{h(\mb{x}_i)\Delta t_i}{2}\right)}\begin{bmatrix}\cos{\Delta t_i} &\sin{\Delta t_i}\\-\sin{\Delta t_i}& \cos{\Delta t_i}\end{bmatrix}\mb{x}_i, \nonumber \\ & = \mb{M}_i(\Delta t_i)\mb{x}_i,
\end{align*}
for $t\in[t_i,t_{i+1})$ with $\Delta t_i = t-t_i$. Denoting $\omega_i=h(\mb{x}_i)\Delta t_i$, we see that the norm of the error is lower bounded by a function monotonically increasing in time:
\begin{align*}
    \Vert\mb{e}(t)\Vert_2 &=  \Vert(\mb{I}-\mb{M}_i(\Delta t_i))\mb{x}_i\Vert_2, \\
        % &= \sqrt{\mb{x}_i^\top(\mb{M}_i(\Delta t_i)^\top\mb{M}_i(\Delta t_i)-\mb{M}_i(\Delta t_i)^\top-\mb{M}_i(\Delta t_i)+\mb{I})\mb{x}_i}\\
        &=  \sqrt{\left(\exp{(\omega_i)}-2\exp{\left(\frac{\omega_i}{2}\right)}\cos(\Delta t_i)+1\right)}\Vert\mb{x}_i\Vert_2,\\
        &\geq  \left\lvert \exp{\left(\frac{\omega_i}{2}\right)}-1\right\rvert\Vert\mb{x}_i\Vert_2.
\end{align*}
This lower bound on the error grows unbounded in time. This implies that no matter the state in $\textrm{Int}(\C)$ that an event occurs, another event must occur at some time in the future (or the bound in \eqref{eqn:cextrigger} will be violated as $h$ is upper bounded on the safe set $\C$). Thus, for all $T>0$, there exists an event time $t_i>T$.

Next, we will show that $\lim_{t\to\infty}h(\mb{x}(t))=0$. The state solution yields:
\begin{equation*}
h(\mb{x}(t)) = 1- \Vert\mb{x}(t)\Vert_2^2 = 1-  \exp{(h(\mb{x}_i) \Delta t_i)}\Vert\mb{x}_i\Vert_2^2,
\end{equation*}
with time derivative:
\begin{equation*}
    \dot{h}(\mb{x}(t)) = -h(\mb{x}_i)\exp{(h(\mb{x}_i) \Delta t_i)}\Vert\mb{x}_i\Vert_2^2,
\end{equation*}
for $t\in[t_i,t_i+1)$. Within the safe set we have that $\dot{h}(\mb{x}(t))\leq 0$, such that $h(\mb{x}(t))$ is monotonically decreasing in time. The safety of $\C$ implies $h(\mb{x}(t))$ is lower bounded by $0$, and thus we can conclude that $\lim_{t\to\infty}h(\mb{x}(t))$ exists. Assume that this limit is some value $0<c<1$. Thus for any $\delta > 0$, there exists $T>0$ such that for $t>T$, $h(\mb{x}(t)) < c+\delta$. Since there are an infinite number of events, we deduce there exists $t_i>T$ such that $h(\mb{x}(t_i))<c+\delta$. As $h(\mb{x}(t))$ is monotonically decreasing, it also follows $h(\mb{x}(t))\geq c$ for all $t$. This implies:
\begin{equation*}
    \dot{h}(\mb{x}(t)) \leq -c\exp{(c \Delta t_i)}(1-(c+\delta)) \leq -c+c^2+c\delta < 0.
\end{equation*}
for $t\geq t_i$ where $\delta$ is chosen small enough to enforce the strict inequality with $0$. Thus between two events we have:
\begin{equation*}
    h(\mb{x}(t_{i+1})) \leq h(\mb{x}(t_i)) + \tau(-c+c^2+c\delta),
\end{equation*}
where $\tau$ is the assumed MIET. 
 Choosing $\delta < \tau(c-c^2)/(1+\tau c)$ implies $h(\mb{x}(t_{i+1})) < c$, contradicting the assumption that $c\neq 0$ (and maintaining the assumption on the existence of~$\tau$).

% To prove that the interevent time is not lower bounded, we will show that a trajectory with an initial condition $x_0\in\C \setminus \{0\}$ inside the safe set will have interevent times tend towards zero. To this end, we offer a proof by contradiction. First, we assume that there exists a $\tau>0$ such that $t_{i+1}-t_i \geq \tau$ for all $i$. From this assumption, we note that the solution is defined for all time $t\in \real_+$. Then, our closed form solution above shows that $h(\mb{x}(t))$ is monotonically decreasing at all time with $\lim_{t\rightarrow \infty} h(\mb{x}(t))=0$. Furthermore, because the lower bound of $\|\mb{e}(t)\|$ is monotonically increasing and $h(\mb{x}(t))$ is monotonically decreasing during each time period $[t_i,t_{i+1})$, a trigger will always occur. Consequently, the sequence $\{h(\mb{x}_i)\}_{i\in\N}$ is strictly decreasing and approaches zero as $i\rightarrow \infty$. 

To complete the proof, note $\mb{e}(t_i)=\mb{0}$ and take the second-order (one-sided) Taylor expansion of $\Vert\mb{e}(t)\Vert_2^2$ at $t= t_i$:
\begin{align*}
    \Vert\mb{e}(t)\Vert_2^2 & = \left(\dot{\mb{e}}(t_i)^\top\dot{\mb{e}}(t_i)\right)(t-t_i)^2+\cal{O}((t-t_i)^3), \nonumber \\
    & = (1+k(\mb{x}(t_i)))\Vert\mb{x}(t_i)\Vert_2^2(t-t_i)^2+\cal{O}((t-t_i)^3), \nonumber \\ 
    & \geq \Vert\mb{x}(t_i)\Vert_2^2(t-t_i)^2 -c_3(t-t_i)^3,
\end{align*}
with $c_3>0$. The first term in the inequality follows from $k(\mb{x}(t_i))\geq 0$ for $\mb{x}(t_i)\in\C$. The second term follows if we view $\frac{\mathrm{d}^3}{\mathrm{d}t^3}\Vert\mb{e}(t)\Vert_2^2$ as a continuous function of the state, which remains within the compact set $\C$. Then $\frac{\mathrm{d}^3}{\mathrm{d}t^3}\Vert\mb{e}(t)\Vert_2^2$ will be bounded for all time as $\C$ is forward invariant. We can use this bound in conjunction with Lagrange's Remainder Formula~\cite{abbott2001understanding} to assert the existence of $c_3$.

At trigger time $t_i$, let $h(\mb{x}(t_i)) = \epsilon_i$ with $\epsilon_i >0$ arbitrarily small due to the existence of infinite triggers and convergence of $h$. This implies $\Vert\mb{x}(t_i)\Vert_2^2=1-\epsilon_i$. Let $n\in\N$ be such that $\frac{1}{c_3}< n\tau$ and define $t^\star_i = t_i + \frac{1}{n}\left(\frac{1-\epsilon_i}{c_3}\right)$, noting $t_i^\star<t_i+\tau$. It follows from the Taylor expansion that:
\begin{equation*}
    \Vert\mb{e}(t_i^\star)\Vert_2^2 \geq \frac{(1-\epsilon_i)^3}{c_3^2}\frac{n-1}{n^3}.
\end{equation*}
As $\epsilon_i$ can be chosen arbitrarily small, we choose it such that:
\begin{equation*}
    (1-\epsilon_i)^3 \geq \frac{\sigma^2n^3}{4r^6(n-1)}\epsilon_i^2,
\end{equation*}
which indicates that:
\begin{equation*}
    2r^3\Vert\mb{e}(t^\star_i)\Vert_2 \geq \sigma\vert h(\mb{x}(t_i))\vert \geq \sigma\vert h(\mb{x}(t^\star))\vert, 
\end{equation*}
as $h(\mb{x}(t))$ is monotonically decreasing. As $t_i^\star<t_i+\tau$ this contradicts that $\tau$ is the MIET. The number of events as a function time and distance from the barrier are seen in Figure \ref{fig:cex}. Simulated interevent times for this system are shown in Figure \ref{fig:simresults} (see the blue curves). \end{proof}

\begin{figure}[b]
    \centering
    \hspace*{-.5cm}
    \includegraphics[trim = {0, 0, 0, 0 cm}, clip ,scale =0.4, valign=t]{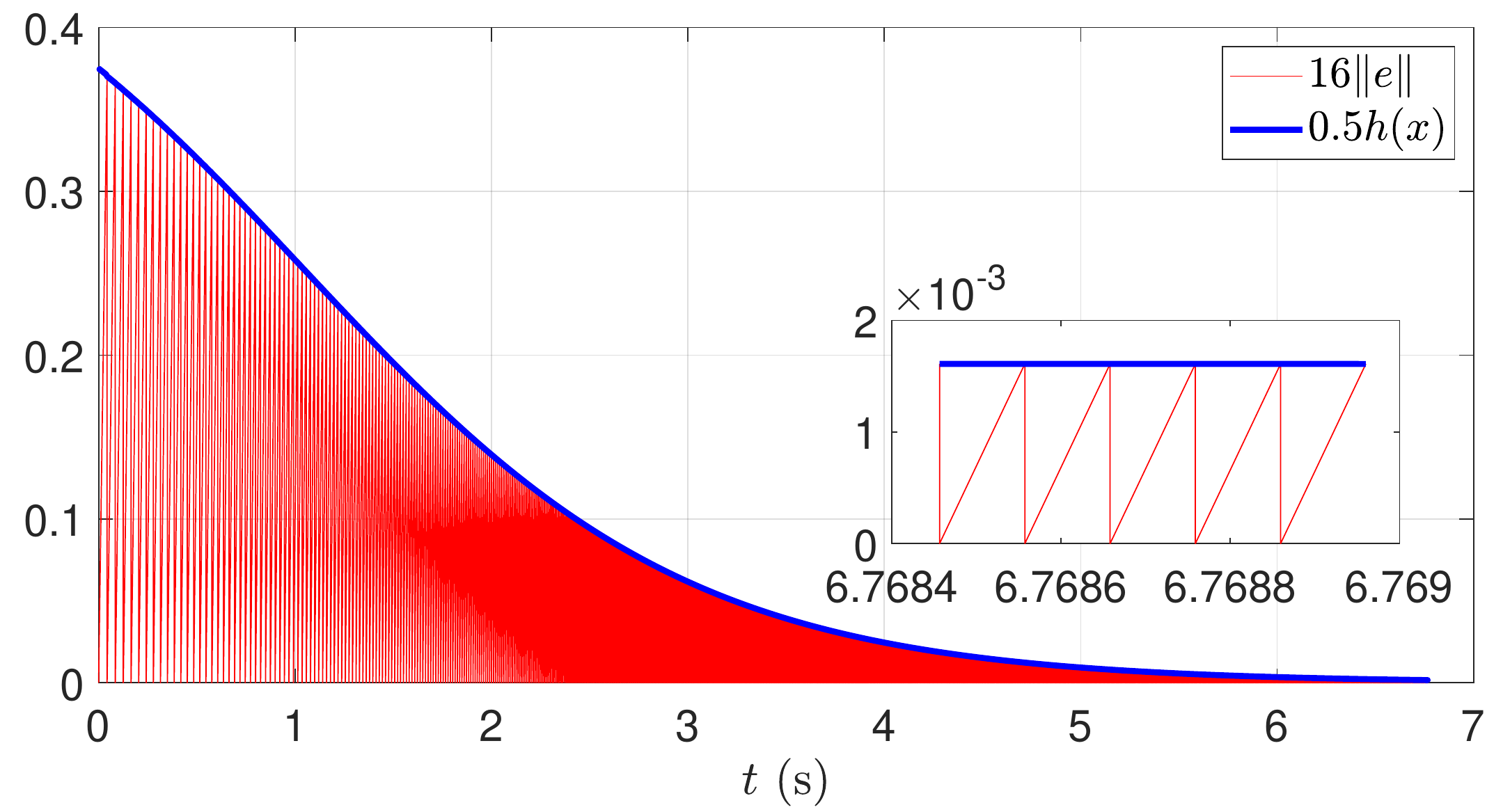}
    \caption{Simulation results for the system \eqref{eqn:cexcloop} using the trigger law \eqref{eqn:cextrigger}. Even as the boundary of the safe set is approached $h(t)\to 0$, the growth rate of the error does not diminish, leading to arbitrarily small interevent times.}
    \label{fig:cex}
\end{figure}

Throughout the proof of Lemma~\ref{le:no-MIET}, a critical concept arises at multiple steps. The fact that the state dynamics \eqref{eqn:cexcloop} are not required to converge to $\mb{0}$ at the boundary of the safe set $\C$ leads to the derivative of the measurement error, $\frac{\mathrm{d}}{\mathrm{d}t}\Vert\mb{e}(t_i)\Vert_2$, being uniformly lower bounded at event times $t_i$, which, together with the convergence of $h$ to $0$, cf. Figure \ref{fig:cex}, leads in turn to arbitrarily small interevent times.
% an infinite number of trigger events
%
% \marginJC{Maybe I'd say this a bit differently, to focus on the key point and assert causality. E.g, "...at the boundary of the safe set $\C$ leads to the derivative of the measurement error being uniformly lower bounded which, together with the convergence of $h$ to $0$, leads in turn to an infinite number of trigger events"}
%
In particular, the dynamics are allowed to evolve tangentially to the boundary of the safe set, which leads to growing measurement error while moving arbitrarily close to the 0-level set of $h$. 

As the original controller may have additional objectives beyond safety (such as stabilization), it is desirable that the event-triggered implementation not completely eliminate tangential motion near the boundary that may be necessary to achieve the other objectives. To accommodate this, we will introduce a trigger law that limits dynamic evolution tangential to the boundary of the safe set.

\section{Event-Triggered Safety}
\label{sec:etsafety}
In this section, we propose an alternative trigger to the one formulated in Section~\ref{sec:lyapbar} that ensures a MIET exists. To resolve the issues highlighted in the preceding counterexample, we introduce the following definition:
% resolves the issue of interevent times that are not lower bounded.
% %
% \marginJC{With MIET in place, we can simply say "that ensures a MIET exist". There are similar instances like this later.}
% %

% The critical issue highlighted by the counterexample is that arbitrarily close to the boundary of the safe set, $\frac{\mathrm{d}}{\mathrm{d}t}\Vert\mb{e}(t_i)\Vert_2^2$ at each event time $t_i$ is lower bounded by a positive constant, while $\frac{\mathrm{d}}{\mathrm{d}t}\vert\sigma h(\mb{x}(t_i))\vert$ converges to 0 as the boundary is approached.
%
% \marginJC{I think I'd move/fuse "The critical issue..." to Section IV.B. Here, we're just repeating ourselves (also, do we rather say $\vert\sigma h(\mb{x}(t_i))\vert$ converges to zero?)}
%

\begin{definition}[\textit{Strong ISSf Barrier Property}]
An ISSf-BF $h$ satisfies the \textit{strong ISSf barrier property} if there exists $d\in\R$ with $d>0$ such that for all $\mb{x},\mb{e}\in\R^n$:
\begin{equation}
\label{eqn:sbp}
        \derp{h}{\mb{x}}(\mb{x})\mb{f}(\mb{x},\mb{k}(\mb{x}+\mb{e})) \geq -\alpha(h(\mb{x}))+d-\iota(\Vert\mb{e}\Vert_2),
\end{equation}
\end{definition}

This property introduces a positive constant, $d$, into the ISSf-BF condition \eqref{eqn:ISSf-BF}. In the presence of zero measurement error, this enforces that the state dynamics must lie in the interior of the tangent cone \cite{blanchini2008set} when on the boundary of the safe set $\C$. It also enforces that $\frac{\mathrm{d}}{\mathrm{d}t}\vert\sigma h(\mb{x}(t_i))\vert$ will be greater than a positive constant as we approach the boundary, similarly to $\frac{\mathrm{d}}{\mathrm{d}t}\Vert\mb{e}(t_i)\Vert_2^2$. We now show this property is sufficient to design a trigger law that ensures safety with a MIET. 

% The condition \eqref{eqn:sISSf-BF} given in the above definition adds a constant term to the \eqref{eqn:ISSf-BF}. Note that a new function $\alpha'$ can always be selected such that $-\alpha'(h(x))+d< -\alpha(h(x))$ for the $x$ sufficiently far away from the the boundary of the safe set where $h(x)=0$. Thus, the condition puts an extra constraint only near the boundary, which can be interpreted as subjecting the nominal closed-loop system trajectory to steer away from the boundary of the safe set as it gets closer the boundary. Indeed, this is not the case for all controlled system dealing with safety. However, before we further justify having the assumption, we first present a trigger law that has a positive minimum interevent time.

\begin{theorem}[Trigger Law for Safety Critical Systems]
\label{thm:safetrig}
Let $h$ be an ISSf-BF for \eqref{eqn:clooperr} on a set $\C\subset\R^n$ defined as in \eqref{eqn:safeset}-\eqref{eqn:safetsetinterior}, with corresponding functions $\alpha\in\K_{\infty,e}$ and $\iota\in\K_\infty$. Let $\beta\in\K_{\infty,e}$ and $\sigma \in (0,1]$. If the following assumptions are satisfied:
\begin{enumerate}
    \item $h$ satisfies the strong ISSf barrier property for a constant $d\in\R$, $d>0$,
    \item $\iota$ is Lipschitz continuous with Lipschitz constant $L_\iota$,
    \item There exists $F\in\R$, $F>0$, such that for all $\mb{x},\mb{e}\in\R^n$:
    \begin{equation*}\Vert\mb{f}(\mb{x},\mb{k}(\mb{x}+\mb{e}))\Vert_2\leq F,
    \end{equation*}
    \item $\beta(r)\geq\alpha(r)$ for all $r\in\R$,
\end{enumerate}
then the trigger law:
\begin{align}
\label{eqn:correctrigger}
   t_{i+1} = \min\big\{t\geq t_i ~|~ \iota(\Vert\mb{e}(t)\Vert_2) &= \beta(h(\mb{x}(t))) 
   \\
   & \quad -\alpha(h(\mb{x}(t))) + \sigma d\big\}, \notag 
\end{align}
deployed recursively enforces:
\begin{equation}
 \dot{h}(\mb{x},\mb{e}) =  \derp{h}{\mb{x}}(\mb{x})\mb{f}(\mb{x},\mb{k}(\mb{x}+\mb{e})) \geq -\beta(h(\mb{x})),
\end{equation}
thus rendering the set $\C$ safe and asymptotically stable. Furthermore, there exists a MIET given by:
\begin{equation*}
    t_{i+1}-t_i \geq \tau \triangleq \frac{\sigma d}{L_{\iota}F},\quad i\in\N.
\end{equation*}
% \marginJC{$i \in N$ instead of $n \in N$, no?}
\end{theorem}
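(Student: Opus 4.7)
The plan is to split the proof into two parts: first establishing that the trigger law enforces the desired barrier inequality (which gives safety and asymptotic stability), and then bounding the interevent times from below.

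For the first part, I would start with the strong ISSf barrier property (assumption 1), which gives
\[
\derp{h}{\mb{x}}(\mb{x})\mb{f}(\mb{x},\mb{k}(\mb{x}+\mb{e})) \geq -\alpha(h(\mb{x})) + d - \iota(\|\mb{e}\|_2).
\]
The trigger law in \eqref{eqn:correctrigger}, deployed recursively, guarantees $\iota(\|\mb{e}(t)\|_2) \leq \beta(h(\mb{x}(t))) - \alpha(h(\mb{x}(t))) + \sigma d$ on each interval $[t_i,t_{i+1})$. Substituting this upper bound on $\iota(\|\mb{e}\|_2)$ into the strong ISSf inequality collapses the $-\alpha(h)$ terms and leaves $\dot h \geq -\beta(h(\mb{x})) + (1-\sigma)d \geq -\beta(h(\mb{x}))$ since $\sigma\in(0,1]$ and $d>0$. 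With $\beta\in\K_{\infty,e}$, a standard comparison argument (as in the BF references already cited) yields forward invariance of $\C$ and asymptotic stability of $\C$.

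For the MIET, the key observation is that at each trigger time $t_i$ we have $\mb{e}(t_i)=\mb{0}$, so $\iota(\|\mb{e}(t_i)\|_2)=\iota(0)=0$. I would bound the growth of $\|\mb{e}(t)\|_2$ on $[t_i,t_{i+1})$ by integrating $\dot{\mb{e}}(t) = -\mb{f}(\mb{x}(t),\mb{k}(\mb{x}(t_i)))$, using the uniform bound $F$ from assumption 3 to conclude $\|\mb{e}(t)\|_2 \leq F(t - t_i)$. Lipschitz continuity of $\iota$ (assumption 2) then gives $\iota(\|\mb{e}(t)\|_2) \leq L_\iota F (t-t_i)$. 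Meanwhile, assumption 4 ($\beta(r) \geq \alpha(r)$) implies the right-hand side of the trigger condition is uniformly bounded below by $\sigma d$. Thus for any $t$ with $t - t_i < \sigma d/(L_\iota F)$,
\[
\iota(\|\mb{e}(t)\|_2) \leq L_\iota F (t-t_i) < \sigma d \leq \beta(h(\mb{x}(t))) - \alpha(h(\mb{x}(t))) + \sigma d,
\]
so the trigger condition cannot fire and hence $t_{i+1}-t_i \geq \tau = \sigma d/(L_\iota F)$.

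I do not anticipate a major obstacle. The only delicate point is making sure the inequality $\beta(h(\mb{x}))-\alpha(h(\mb{x}))+\sigma d \geq \sigma d$ holds globally (including when $\mb{x}\notin\C$, where $h<0$): this is precisely what assumption 4 is engineered to provide, since $\beta-\alpha \geq 0$ everywhere on $\R$. The asymptotic stability claim likewise uses that $\beta\in\K_{\infty,e}$ is defined on all of $\R$, so the comparison estimate applies from initial conditions outside $\C$ as well. With those checks in place, the four assumptions of the theorem map cleanly onto the four ingredients needed — strong ISSf for the safety inequality, Lipschitzness of $\iota$ and the bound $F$ for the MIET estimate, and $\beta \geq \alpha$ to keep the trigger threshold uniformly above $\sigma d$.
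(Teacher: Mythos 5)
Your proposal is correct and follows essentially the same route as the paper's proof: the strong ISSf inequality combined with the trigger bound on $\iota(\Vert\mb{e}\Vert_2)$ yields $\dot h \geq -\beta(h(\mb{x})) + (1-\sigma)d \geq -\beta(h(\mb{x}))$, and the MIET follows from $\Vert\mb{e}(t)\Vert_2 \leq F(t-t_i)$, the Lipschitz bound $\iota(\Vert\mb{e}(t)\Vert_2)\leq L_\iota F(t-t_i)$, and Assumption 4 keeping the trigger threshold at or above $\sigma d$. Your explicit check that $\beta-\alpha\geq 0$ must hold globally (including where $h<0$) is exactly the role the paper assigns to Assumption 4.
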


Before proving the result, we make a few observations regarding its assumptions. Assumption 3 on the boundedness of the dynamics need not hold over the entire state space for safety, but can hold for $(\mb{x},\mb{e})\in\C\times\R^n$. Furthermore, if $\C$ is compact, the trigger law enforces the existence of a compact set $E\subset\R^n$ such that $\mb{e}\in E$. Thus, the continuity of $\mb{f}$ and $\mb{k}$ implies the assumption is satisfied on $\C\times E$. This property needs to hold over a set containing $\C\times\R^n$ if asymptotic stability of $\C$ is desired.
% %
% \marginJC{I'm a bit confused by the discussion above: it only seems to refer to the state $x$, whereas Assumption 3) refers to both $x$ and $e$. Isn't $e$ bounded as a result of our trigger design?}
% %
We note that a similar bound on the dynamics is effectively required in the event-triggered stabilization formulation in \cite{tabuada2007event}, where it arises via a Lipschitz argument.

Assumption 4 ensures the right-hand side of the equality in the trigger will always be positive. $\beta$ can be thought of as a tuning function, which can practically raise interevent times (but not the MIET) at the expense of less ``braking" within the safe set and convergence outside of the safe set. One choice is $\beta=\alpha$, in which case interevent times are lowered for more braking and faster convergence.
\begin{proof}
To see the set $\C$ is rendered safe, observe that: \begin{align*}
      \dot{h} =  \derp{h}{\mb{x}}(\mb{x})\mb{f}(\mb{x},\mb{k}(\mb{x}+\mb{e})) \geq& -\alpha(h(\mb{x}))+d-\iota(\Vert\mb{e}\Vert_2),\\
        \geq& -\alpha(h(\mb{x}))+d ,\\
       & - (\beta(h(\mb{x}))-\alpha(h(\mb{x}))+\sigma d), \\
        = & -\beta(h(\mb{x})) + (1-\sigma)d,\\
        \geq & -\beta(h(\mb{x})),
\end{align*}
implying the barrier condition \eqref{eqn:barrier} is satisfied. Thus the set $\C$ is rendered safe.
%
% \marginJC{I don't think we can say $\C$ is AS before establishing that no Zeno occurs -- there is no "asymptotic" until we've established that, right? }
%

To see the interevent time is lower bounded, observe that
\begin{align*}
    \Vert\mb{e}(t)\Vert_2 = & \left\Vert\mb{e}(0) +\int_{t_i}^t(-\mb{f}(\mb{x}(\tau),\mb{e}(\tau)))\mathrm{d}\tau\right\Vert_2 \\
    = & \left\Vert \int_{t_i}^t(-\mb{f}(\mb{x}(\tau),\mb{e}(\tau)))\mathrm{d}\tau\right\Vert_2 
    \leq \int_{t_i}^tF\mathrm{d}\tau .
\end{align*}
This inequality together with the trigger law~\eqref{eqn:correctrigger} yields:
\begin{align*}
\label{eqn:correctrigger}
   t_{i+1} 
   %= & \min\left\{t\geq t_i ~|~ \iota(\Vert\mb{e}(t)\Vert_2) = \begin{array}{l}\beta(h(\mb{x}(t)))\\-\alpha(h(\mb{x}(t))) + \sigma d\end{array}\right\}, \\
   \geq & \min\left\{t\geq t_i ~|~ L_\iota\Vert\mb{e}(t)\Vert_2 = \sigma d\right\}, \\
   \geq & \min\left\{t\geq t_i ~|~ L_\iota F(t-t_i)=\sigma d\right\}
   =  \frac{\sigma d}{L_\iota F}+t_i,
\end{align*}
ensuring the desired result. This in conjunction with the barrier function condition implies $\C$ is asymptotically stable.
\end{proof}

\begin{figure*}[ht]
    \hspace*{-0.5 cm}
     \centering
    \begin{subfloat}
        {\includegraphics[trim = {0, 0, 0, 0 cm}, clip ,scale =0.4, valign=t]{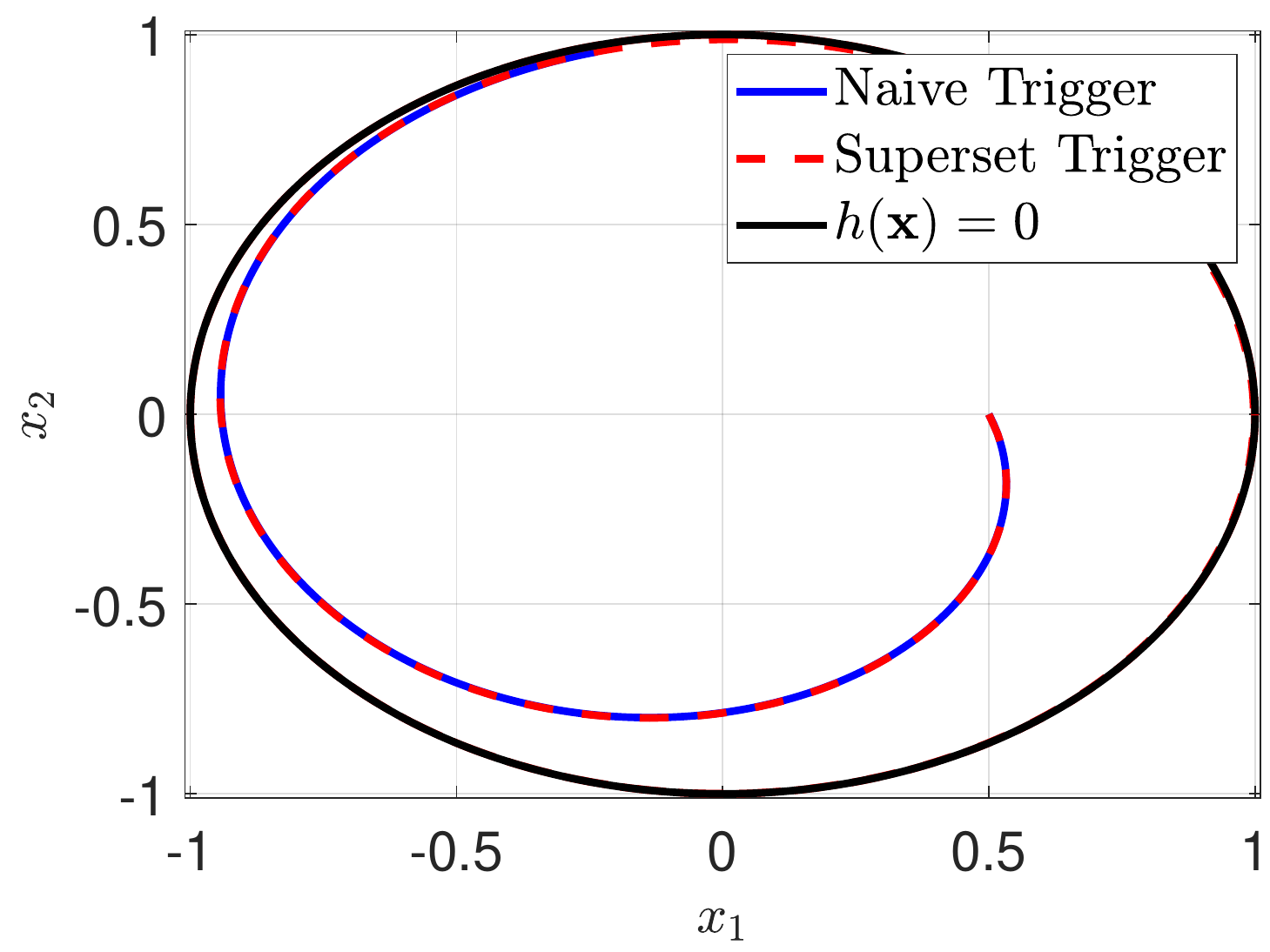}}
    \end{subfloat}
    \hfill
    \hspace*{0 cm}
    \begin{subfloat}
        {\includegraphics[scale =0.4, valign =t]{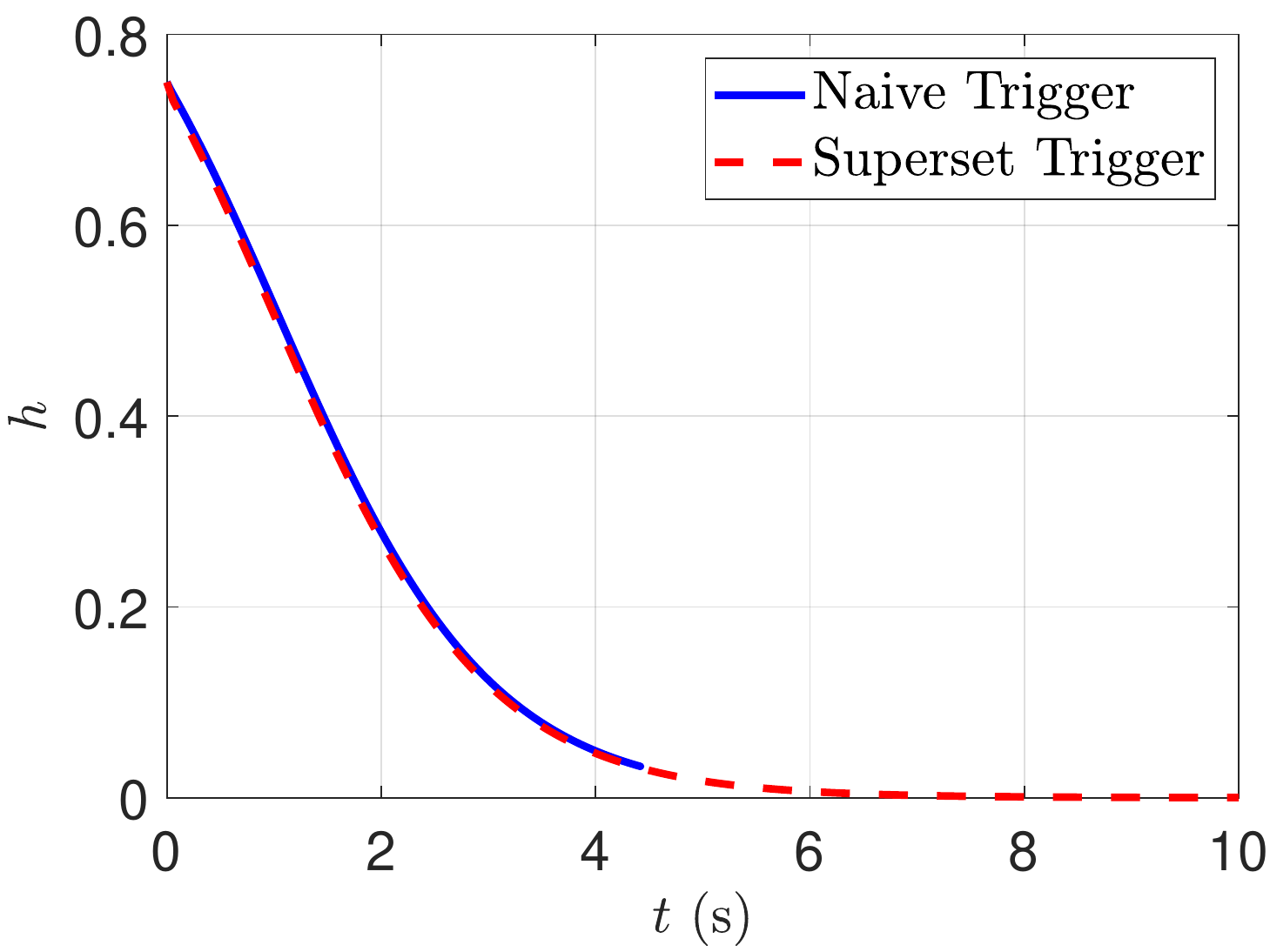}}
    \end{subfloat}
    \hfill
    \begin{subfloat}
      {\includegraphics[scale =0.4,valign=t]{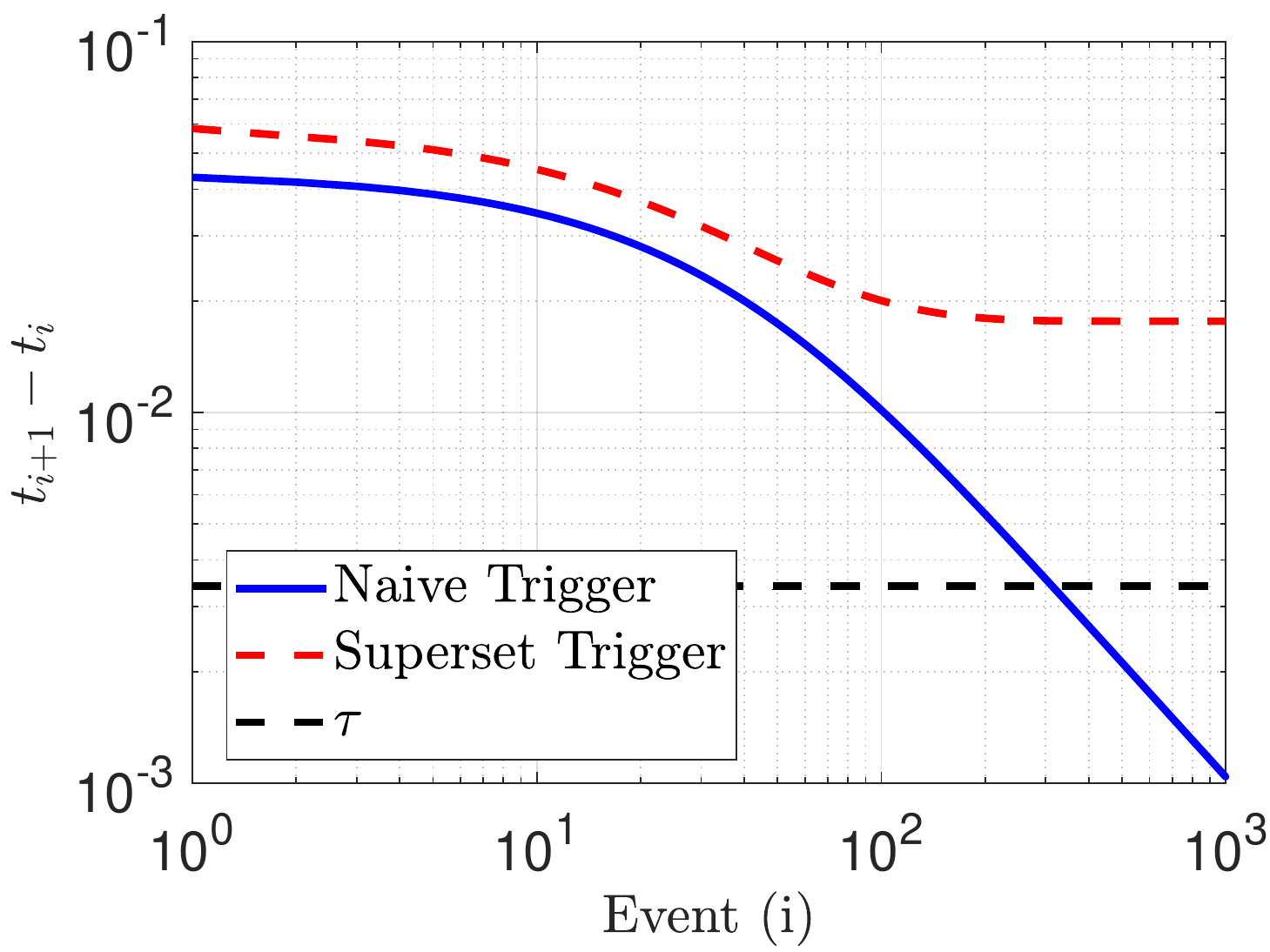}}
    \end{subfloat}
        \caption{Simulation results demonstrating safety achieved with an event-triggered controller. \textbf{(Left)} State trajectories for both triggers remain within the safe set for the length of the simulation. \textbf{(Center)} The value of the ISSf-BF $h$ remains above zero for the length of the simulation, corresponding to the system remaining safe.
        \textbf{(Right)} The interevent times of the two trigger laws. The interevent times of the trigger law \eqref{eqn:cextrigger} approach $0$ while the trigger law \eqref{eqn:correctrigger} satisfies the theoretical bound.
        }
    \label{fig:simresults}
\end{figure*}

% \marginJC{In (right) figure, do things look even more clear in logarithmic scale? Also, re: terminology, how about "naive trigger" instead of "basic one"? }

In the case that an ISSf-BF $h$ does not satisfy the strong barrier property, an auxiliary ISSf-BF, $h_b$, satisfying the strong ISSf barrier property can be synthesized via $h$ at the expense of guaranteeing only a larger set is kept safe. This is formalized in the following result.

\begin{theorem}[Strong ISSf Barrier Property in Supersets]
Let $h$ be an ISSf-BF for \eqref{eqn:clooperr} on a set $\C\subset\R^n$ defined as in \eqref{eqn:safeset}-\eqref{eqn:safetsetinterior}, with corresponding functions $\alpha\in\K_{\infty,e}$ and $\iota\in\K_{\infty}$. Then the function $h_b$ defined as $h_b(\mb{x}) = h(\mb{x})+b$, with $b\in\R$, $b>0$, is an ISSf-BF satisfying the strong ISSf barrier property on the set $\C_b$ defined as:
\begin{equation}
    \C_b \triangleq \{\mb{x}\in\R^n ~|~ h_b(\mb{x})\geq 0\}
\end{equation}
\end{theorem}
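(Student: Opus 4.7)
The plan is to exploit the fact that $h_b$ and $h$ differ only by a constant, so $\derp{h_b}{\mb{x}} = \derp{h}{\mb{x}}$ pointwise. Consequently, the left-hand side of the ISSf-BF inequality \eqref{eqn:ISSf-BF} evaluated using $h_b$ is identical to the one for $h$, and I only need to rewrite the right-hand side $-\alpha(h(\mb{x}))$ in terms of $h_b$, extracting a positive constant in the process.

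First, I would substitute $h(\mb{x}) = h_b(\mb{x}) - b$ and introduce the shifted candidate function $\alpha_b(r) \triangleq \alpha(r-b) - \alpha(-b)$, together with the constant $d \triangleq -\alpha(-b)$. Because $b > 0$ and $\alpha \in \K_{\infty,e}$ is strictly increasing through the origin, $\alpha(-b) < 0$, so $d > 0$. Next I would verify that $\alpha_b \in \K_{\infty,e}$: it inherits continuity and strict monotonicity from $\alpha$, the translation preserves the unbounded behavior of $\alpha$ at $\pm\infty$, and $\alpha_b(0) = 0$ is built in by the specific additive shift.

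With $\alpha(h(\mb{x})) = \alpha_b(h_b(\mb{x})) - d$ holding for every $\mb{x}\in\R^n$, substituting into the ISSf-BF inequality satisfied by $h$ yields
\[
\derp{h_b}{\mb{x}}(\mb{x})\mb{f}(\mb{x},\mb{k}(\mb{x}+\mb{e})) \geq -\alpha_b(h_b(\mb{x})) + d - \iota(\Vert\mb{e}\Vert_2),
\]
which is precisely the strong ISSf barrier property for $h_b$ on $\C_b$. Since this strong property immediately implies the ordinary ISSf-BF condition (simply drop the nonnegative constant $d$), $h_b$ is simultaneously an ISSf-BF and satisfies the strong barrier property on its $0$-superlevel set $\C_b$, with class $\K_{\infty,e}$ function $\alpha_b$ and class $\K_\infty$ function $\iota$.

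I do not anticipate a substantive obstacle: the argument amounts to a change of coordinates in the barrier value, and the only creative step is calibrating the shift $d = -\alpha(-b)$ so that $\alpha_b$ passes through the origin and remains in $\K_{\infty,e}$. The required positivity of $d$ then drops out automatically from $\alpha$ being an extended class $\K_\infty$ function together with $b > 0$.
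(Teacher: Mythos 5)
Your proposal is correct and follows essentially the same route as the paper: substitute $h = h_b - b$, define $\alpha_b(r) = \alpha(r-b) - \alpha(-b)$ and $d = -\alpha(-b) > 0$, and read off the strong ISSf barrier property. The only (welcome) addition is that you explicitly verify $\alpha_b \in \K_{\infty,e}$ and that the strong property subsumes the ordinary ISSf-BF condition, steps the paper leaves implicit.
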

\begin{proof}
Observe that:
\begin{align*}
&
    \derp{h_b}{\mb{x}}(\mb{x})\mb{f}(\mb{x},\mb{k}(\mb{x}+\mb{e})) =  \derp{h}{\mb{x}}(\mb{x})\mb{f}(\mb{x},\mb{k}(\mb{x}+\mb{e})) \\
    & \quad \geq  -\alpha(h(\mb{x})) - \iota(\Vert\mb{e}\Vert_2) 
    \geq -\alpha(h_b(\mb{x})-b)+\alpha(-b) \\
    & \qquad -\alpha(-b) - \iota(\Vert\mb{e}\Vert_2)
    =  -\alpha_b(h_b(\mb{x}))+d_b-\iota(\Vert\mb{e}\Vert_2),
\end{align*}
where $\alpha_b\in\cal{K}_{\infty,e}$ is defined as $\alpha_b(r) = \alpha(r-b)-\alpha(-b)$ and $d_b=-\alpha(-b)>0$.
\end{proof}

This result enables us to use an ISSf-BF with the trigger law in Theorem~\ref{thm:safetrig} by increasing the safe set by an arbitrarily small amount:

\begin{corollary}[Superset Trigger Law]
If $h$ is an ISSf-BF for \eqref{eqn:clooperr} on the set $\C$ satisfying Assumptions (2-4) of Theorem~\ref{thm:safetrig}, then $h_b$ is an ISSf-BF for \eqref{eqn:clooperr} on the set $\C_{b}$ satisfying Assumptions (1-4) of Theorem~\ref{thm:safetrig} such that the corresponding trigger law renders $\C_b$ safe and asymptotically stable with a MIET.
\end{corollary}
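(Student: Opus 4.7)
The plan is to obtain this corollary as a direct composition of the two preceding results: the ``Strong ISSf Barrier Property in Supersets'' theorem upgrades $h$ into a function $h_b$ that enjoys the strong barrier property on the enlarged set $\C_b$, and Theorem~\ref{thm:safetrig} then applies to $(h_b,\C_b)$ to deliver the trigger law, safety, asymptotic stability, and the MIET. The heavy lifting has already been done upstream: the preceding theorem constructs $h_b(\mb{x}) = h(\mb{x}) + b$ as an ISSf-BF on $\C_b$ satisfying the strong ISSf barrier property with $\alpha_b(r) = \alpha(r-b) - \alpha(-b) \in \K_{\infty,e}$, constant $d_b = -\alpha(-b) > 0$, and the same comparison function $\iota$. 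This immediately delivers Assumption (1) of Theorem~\ref{thm:safetrig} for $h_b$.

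The next step is to verify that Assumptions (2-4) of Theorem~\ref{thm:safetrig} transfer from the triple $(h,\alpha,\beta)$ to the triple $(h_b,\alpha_b,\beta_b)$. Assumption (2) concerns only the function $\iota$, which is unchanged by the superset construction; so the Lipschitz constant $L_\iota$ carries over verbatim. Assumption (3) is a property of the dynamics $\mb{f}$ and the controller $\mb{k}$ alone, independent of the barrier function, so the bound $F$ also transfers without modification. Assumption (4) is the only one that genuinely needs attention because $\alpha$ has been replaced by $\alpha_b$; however, this is not an obstruction, since one may select any $\beta_b \in \K_{\infty,e}$ with $\beta_b(r) \geq \alpha_b(r)$ for all $r \in \R$ (e.g., $\beta_b = \alpha_b$). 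With Assumptions (1-4) thus verified for $h_b$ on $\C_b$, Theorem~\ref{thm:safetrig} applied to $(h_b,\alpha_b,d_b,\iota,\beta_b)$ produces the corresponding trigger law, renders $\C_b$ safe and asymptotically stable, and guarantees the MIET $\tau = \sigma d_b / (L_\iota F)$.

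The main obstacle, if one exists, is purely a bookkeeping check on the domain of Assumption (3). As noted in the remarks after Theorem~\ref{thm:safetrig}, the bound on $\Vert\mb{f}(\mb{x},\mb{k}(\mb{x}+\mb{e}))\Vert_2$ need only hold on $\C_b \times \R^n$ (rather than all of $\R^n \times \R^n$) to obtain safety and asymptotic stability of $\C_b$. Since $\C_b \supset \C$, if the bound was originally assumed globally the conclusion is automatic; if instead one is working with the relaxed compact version, then compactness of $\C_b$ (inherited from compactness of $\C$) together with continuity of $\mb{f}$ and $\mb{k}$ reproduces the bound on the enlarged set. Beyond this, the corollary is essentially a syntactic composition of the two preceding results, and no genuinely new analysis is required.
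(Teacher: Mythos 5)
Your proposal is correct and matches the paper's (implicit) argument exactly: the corollary is stated without a separate proof precisely because it is the direct composition of the superset theorem (which supplies Assumption (1) via $h_b$, $\alpha_b$, and $d_b=-\alpha(-b)>0$) with Theorem~\ref{thm:safetrig}. Your additional bookkeeping---noting that $\iota$ and $F$ carry over unchanged while $\beta$ must be re-chosen as some $\beta_b\geq\alpha_b$---is a correct and slightly more careful rendering of the same route.
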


This is effectively an instance of Input-to-State Safety, in which case the original safe set $\C$ defined via $h$ becomes an ISSf safe set. We note that the larger the set is made (via a larger choice of $b$), the larger the MIET will be. This effectively highlights a trade-off that arises in the context of safety but not in stabilization: allowing motion near the boundary of a safe set requires additional relaxations to achieve the additional desirable property of the MIET.

To verify the ability of this trigger to keep the system safe and have a MIET, we simulated the system~\eqref{eqn:cex} using both the trigger law~\eqref{eqn:cextrigger} and the trigger law~\eqref{eqn:correctrigger}. The results of these simulations can be seen in Figure~\ref{fig:simresults}. We see that although both systems are kept safe, the trigger law not using the strong ISSf barrier property has interevent times that approach $0$.

\section{Conclusions}
We have presented a novel approach for achieving the safety of a system with a resource efficient event-triggered control law using Input-to-State Safe Barrier Functions. Similarities and differences between achieving stability and safety in an event-triggered context were highlighted through a counterexample, with a particular focus on how the behavior guaranteed with ISSf-BFs can lead to interevent times that are not lower bounded. This insight is used to propose a trigger law that renders the system input-to-state safe and guarantees a MIET for the system. Future work will include exploring adaptively choosing $b$ to improve interevent times, the synthesis of ISSf-BFs satisfying the strong input-to-state safe barrier property, and event-triggered co-design for stability and safety.

\bibliographystyle{plain}
\bibliography{taylor_main}
\end{document}